\newsavebox\myboxA
\newsavebox\myboxB
\newlength\mylenA
\newcommand*\xoverline[2][0.70]{%
    \sbox{\myboxA}{$\m@th#2$}%
    \setbox\myboxB\null
    \ht\myboxB=\ht\myboxA%
    \dp\myboxB=\dp\myboxA%
    \wd\myboxB=#1\wd\myboxA
    \sbox\myboxB{$\m@th\overline{\copy\myboxB}$}
    \setlength\mylenA{\the\wd\myboxA}
    \addtolength\mylenA{-\the\wd\myboxB}%
    \ifdim\wd\myboxB<\wd\myboxA%
       \rlap{\hskip 0.5\mylenA\usebox\myboxB}{\usebox\myboxA}%
    \else
        \hskip -0.5\mylenA\rlap{\usebox\myboxA}{\hskip 0.5\mylenA\usebox\myboxB}%
    \fi}
\renewcommand\AB@affilsepx{ \protect\Affilfont} 
\def\BState{\State\hskip-\ALG@thistlm}
\theoremstyle{remark}
\newtheorem{theorem}{\textbf{Theorem}}
\renewenvironment{proof}{{\textbf{Proof.}}}{}
\begin{document}

	\title{Diversity Analysis of Millimeter-Wave OFDM Massive MIMO Systems}

  	\small
 	\author{Sadjad Sedighi, \textit{Student Member, IEEE,}}
 	\author{Ender Ayanoglu, \textit{Fellow, IEEE}}
 	\affil{CPCC, Dept of EECS, UC Irvine, Irvine, CA, USA,}
 	\normalsize

 	\maketitle

	\begin{abstract}
We analyze the diversity gain for a distributed antenna subarray employing orthogonal frequency-division multiplexing (OFDM) in millimeter-wave (mm-Wave) massive multiple-input multiple-output (MIMO) systems. We show that the diversity gain depends on the number of transmitted data streams, the number of remote antenna units, and the number of propagation paths between RAUs. Furthermore, we show that by using bit-interleaved coded multiple beamforming (BICMB), one can achieve the maximum diversity gain in a distributed antenna subarray system. The assumption in both scenarios is that the number of the antennas at the transmitter and the receiver are large enough and channel state information (CSI) is known at the transmitter and the receiver. 
	\end{abstract}

\section{Introduction} \label{sec:introduction}

It has been shown that distributed antenna subarray architecture can be suitable for radio communications where it increases the spectral efficiency. Furthermore, it can compensate the path loss in millimeter-wave (mm-Wave) frequency bands by using multiple remote antenna units (RAUs) \cite{Clark2001,Roh2002,Dai2011, Wang2013,Qing2015, Juan2018}. 

Diversity gain in such a system was first studied in \cite{Dian2018J}, where the authors derived a diversity gain formula for a flat-fading channel model for both single-user and multi-user scenarios. The diversity gain in both scenarios depends on the number of RAUs at both the transmitter and the receiver, the number of transmitted data streams, and the number of propagation paths in each subchannel. In other words, by increasing the number of transmitted data streams, the diversity gain decreases. 

In \cite{Sedighi2020J}, we showed that by using bit-interleaved coded multiple beamforming (BICMB), one can achieve the maximum diversity gain mm-Wave massive MIMO system with a distributed antenna subarrays. Also, in \cite{Sedighi2020L}, we used BICMB with perfect coding (BICMB-PC) to achieve the maximum diversity gain. The diversity gain derived in \cite{Sedighi2020J,Sedighi2020L} is based on a flat-fading channel model. Also, the diversity gain is independent of the number of transmitted data streams and depends on the number of RAUs at both the transmitter and the receiver, the number of propagation paths, and the large-scale fading coefficient in each subchannel. Authors in \cite{Xiao2015} used the iterative eigenvalue decomposition (EVD) to find the antenna weight vectors to achieve  full diversity gain for mm-Wave massive MIMO systems. In \cite{Elganimi2018}, a combination of space-time block coded spatial modulation with hybrid analog-digital beamforming was used to achieve the full diversity gain for mm-Wave MIMO systems.

Most of the works studying the mm-Wave massive MIMO systems with a distributed antenna subarray are based on the flat-fading channel model, unlike practical applications where the channel is frequency-selective \cite{Pi2011}. One can use orthogonal frequency division multiplexing (OFDM) modulation to overcome this problem and divide the overall channel into multiple flat-fading channels. Beamforming design for a mm-Wave OFDM distributed antenna systems is studied in \cite{Zhang2020}, where the authors propose a cooperative wideband hybrid beamforming method under the transmitting power constraints of each RAU. Also, authors in \cite{Sohrabi2017} find the optimum beamformers for a single-user the OFDM system which employs hybrid beamforming.

In this paper, we first analyze the diversity gain for mm-Wave OFDM massive MIMO system with distributed antenna subarray. Furthermore, by employing BICMB, we prove that the maximum diversity gain can be achieved for mm-Wave OFDM massive MIMO system with distributed antenna subarray. Employing BICMB to achieve the maximum diversity gain in MIMO OFDM systems was first studied in \cite{Akay2007}. Later on, authors in \cite{Li2013} used more general assumptions to achieve the maximum diversity gain in a MIMO OFDM system.

\section{System Model} \label{sec:model}

Consider a wideband mm-Wave distributed antenna system as in Fig. \ref{fig::bicmb_ofdm_model}
consisting of $M_t$ RAUs at the transmitter where each RAU consists of $N_t$ antennas and $M_r$ RAUs at the receiver where each RAU has $N_r$ antennas. Base station (BS) transmits $N_s$ data symbols per frequency tone. The number of data symbols can be different for different frequency tones. However, for the sake of simplicity, the focus of this letter is only on the case with an equal number of data streams for all subcarriers. Since there are tens of subcarriers in a typical OFDM system, the analysis of the case where different subcarriers have different number of data streams would be prohibitively complex. Furthermore, for mm-Wave systems with highly correlated channels, all the subchannels are typically low rank \cite{Sohrabi2017}. Furthermore, since implementing  fully digital beamforming is expensive, we consider a hybrid beamforming architecture, where the number of RF chains is much smaller than the number of antennas at both the transmitter side and the receiver side, i.e., $N_t^{\text{RF}} \ll N_t$ and $N_r^{\text{RF}} \ll N_r$. The overall beamformer in the hybrid beamforming architecture consists of a low-dimensional digital beamformer and a high-dimensional analog beamformer.

\ifCLASSOPTIONonecolumn
 \begin{figure*}[!t]
         \centering
         \includegraphics[width=\textwidth]{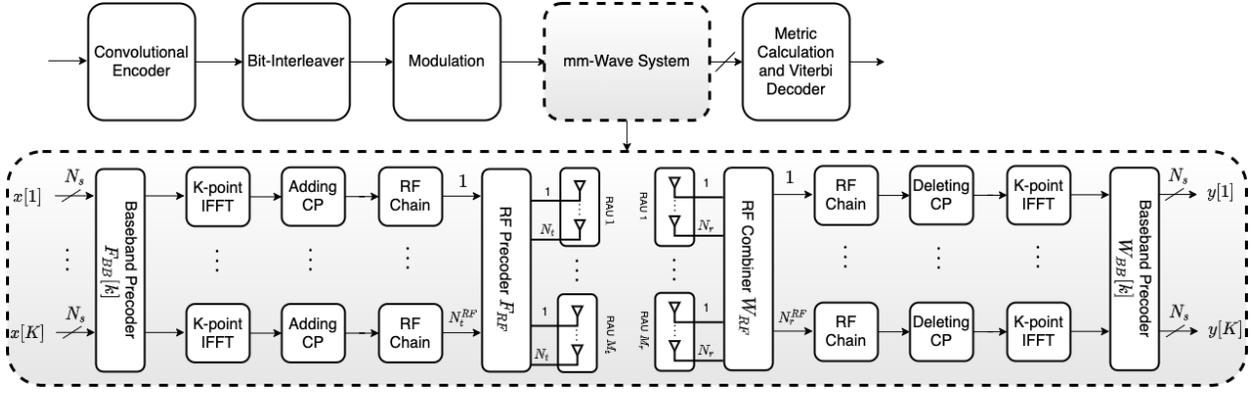}
 		\caption{Block diagram of a single-user mm-Wave massive MIMO system with distributed antenna subarray employing OFDM.}
 		\label{fig::bicmb_ofdm_model}
 	\end{figure*}
 \else
  \begin{figure}[!t]
 	\centering
 	\includegraphics[width=0.48\textwidth]{figures/bicmb_ofdm_model.eps}
 	\caption{Block diagram of a single-user mm-Wave massive MIMO system with distributed antenna subarray employing OFDM.}
 	\label{fig::bicmb_ofdm_model}
 \end{figure}
 \fi
Let $K $ denote the number of subcarriers in the system and $N_{cp}$ the number of cyclic prefix (CP) samples. Therefore, the OFDM symbol length and the CP length are $T_k = KT_s$ and $T_{cp} = N_{cp}T_s$, respectively, where $T_s$ is the system sampling interval.

One can write the channel response in the frequency domain as
	\begin{align} \label{eq::bicmb_ofdm_su_channel} 
	\mathbf{H}[k]=
	\begin{bmatrix}
    \mathbf{H}_{1,1}[k] & \mathbf{H}_{1,2}[k]     & \dots  & \mathbf{H}_{1,M_t}[k] \\
	\mathbf{H}_{2,1}[k] & \mathbf{H}_{2,2} [k]    & \dots  & \mathbf{H}_{2,M_t} [k] \\
	\vdots           & \vdots               & \ddots & \vdots              \\
    \mathbf{H}_{M_r,1} [k] & \mathbf{H}_{M_r,2} [k] & \dots  & \mathbf{H}_{M_r,M_t} [k]
	\end{bmatrix},
	\end{align}
where $\mathbf{H}_{i,j}[k]$ is the subchannel between $i$th RAU at the receiver and $j$th RAU at the transmitter, at $k$th subcarrier, defined as
\begin{align}\label{eq::bicmb_ofdm_su_channel_freq}
    \mathbf{H}_{i,j} [k]= \sqrt{\beta\frac{N_rN_t}{L}} \sum_{l=1}^{L} \alpha_{i,j}^l \mathbf{a}_r(\theta_{i,j}^l) \mathbf{a}_t^H(\phi_{i,j}^l)e^{-j2\pi\frac{k}{K}\frac{\tau_{i,j}^l}{T_s}},
\end{align}
	in which $L$ is the number of propagation paths and $\alpha_{ij}^{l}$ is the complex gain of the $l$th ray which follows $\mathcal{CN}(0,1)$, $\theta_{ij}^l \in [0,2\pi]$,  $\phi_{ij}^l \in [0,2\pi]$, for all 
$i$, $j$, $l$, and the vectors $\mathbf{a}_r(\theta_{ij}^{l})$ and $ \mathbf{a}_t(\phi_{ij}^{l})$ are the normalized array response at the receiver and transmitter, respectively. In particular, this letter adopts a  uniform linear array (ULA) where both $\mathbf{a}_r(\theta_{ij}^{l})$ and $ \mathbf{a}_t(\phi_{ij}^{l})$ are modeled as
\begin{align}
	\mathbf{a}_{\text{ULA}}(\varphi)=\frac{1}{\sqrt{N}}\left[1, e^{j\frac{2\pi}{\lambda}d\sin(\varphi)},\dots, e^{j(N-1)\frac{2\pi}{\lambda}d\sin(\varphi)}\right]^T,
\end{align}
where $\lambda$ is the transmission wavelength and $d$ is the antenna spacing. Please note that in this letter the assumption is that the operating frequency is much larger than the total bandwidth of the OFDM system. This results in  signal wavelength in all subcarriers to be considered approximately equal.

In the mm-Wave hybrid beamforming architecture shown in Fig. \ref{fig::bicmb_ofdm_model}, the transmitter first precodes $N_s$ data symbols $\mathbf{x}[k]$ at each subcarrier $k=1,\dots,K$, using a low-dimensional digital precoder, $\mathbf{V}_\text{D}[k] \in \mathbb{C}^{N_{t}^\text{RF} \times N_s}$, then transforms the signals to the time domain by using $N_{t}^\text{RF}$ $K$-point inverse fast Fourier transforms (IFFTs). After adding cyclic prefixes, the transmitter employs an analog precoding matrix $\mathbf{V}_\text{RF} \in \mathbb{C}^{N_t \times N_{t}^\text{RF}}$, to generate the final transmitted signal. Since the analog precoding is done after the IFFT block, the analog precoder is identical for all subcarriers. This is the key challenge in designing the hybrid beamformers in OFDM systems as compared to single-carrier systems \cite{Sohrabi2017}.

After receiving the transmitted signals from all the RAUs and all subcarriers, the received signals are processed by using the analog beamformer $\mathbf{W}_{\text{RF}}$, where $\mathbf{W}_{\text{RF}}$ is an $M_rN_r\times N_{r}^{\text{RF}}$ complex matrix. Then, the  receiver removes the CP and then performs fast Fourier transform (FFT). Finally, for each subcarrier, a low-dimensional digital combiner is used to process the the frequency domain signals. The processed signal at the $k$th subcarrier can be written as 
\begin{align} \label{eq::bicmb_ofdm_rec_sig}
	\mathbf{y}[k] = \mathbf{W}_{\text{BB}}^H[k]  \mathbf{W}_{\text{RF}}^H \mathbf{H}[k] \mathbf{F}_{\text{RF}} \mathbf{F}_{\text{BB}}[k] \mathbf{x}[k] + \mathbf{W}_{\text{BB}}^H[k]  \mathbf{W}_{\text{RF}}^H \mathbf{n}[k],
\end{align}
where $\mathbf{n}[k] $  is the additive complex white  Gaussian noise with zero mean and variance $N_t/\text{SNR}$ for subcarrier $k$.

In a similar fashion to \cite{Ayach2012, Sedighi2020J}, one can find the beamforming matrices at both the transmitter and the receiver sides. We first rewrite (\ref{eq::bicmb_ofdm_su_channel}) in a more compact form, then by assuming that the number of antennas at both the transmitter and the receiver goes to infinity, we find the optimal solutions for the beamforming matrices \cite{Sohrabi2017}.

One can easily see from the subchannel and channel matrix definitions in (\ref{eq::bicmb_ofdm_su_channel}) and (\ref{eq::bicmb_ofdm_su_channel_freq}) that there is a total of $L_s=M_rM_tL$ propagation paths. We can write the channel matrix $\mathbf{H}[k]$  as a summation of $L_s$ matrices, where each matrix is rank one. In other words 
\begin{align} \label{eq::bicmb_su_channel_composition}
\mathbf{H}[k] = \sum_{i=1}^{M_r}\sum_{j=}^{M_t}\sum_{l=1}^{L_{ij}}\tilde{\alpha}^l_{ij}
\tilde{\mathbf{a}}_r(\theta_{ij}^{l}) \tilde{\mathbf{a}}_t^H(\phi_{ij}^{l}) e^{-j2\pi\frac{k}{K}\frac{\tau_{i,j}^l}{T_s}},
\end{align}
where
\begin{align}
\tilde{\alpha}^l_{ij}[k]=\sqrt{\beta\frac{N_rN_t}{L}} {\alpha}^l_{ij}e^{-j2\pi\frac{k}{K}\frac{\tau_{i,j}^l}{T_s}}.
\end{align}
Note that in (\ref{eq::bicmb_su_channel_composition}) we define $\tilde{\mathbf{a}}_r(\theta_{ij}^l)$ as an $M_rN_r\times 1$ vector whose $p$th entry is defined as
\begin{align}
[\tilde{a}_{r}(\theta_{ij}^{l})]_{p} =  \left\{
\begin{array}{ll}
[\tilde{a}_{r}(\theta_{ij}^{l})]_{p-(i-1)N_r} , & (i-1)N_r < p \leq iN_r  \\
0, & \text{otherwise}.
\end{array}
\right.
\end{align}
Similarly, we define a $M_tN_t\times 1$ vector $\tilde{\mathbf{a}}_t(\phi_{ij}^l)$ as 
\begin{align}
[\tilde{a}_{t}(\phi_{ij}^{l})]_{q} =  \left\{
\begin{array}{ll}
[\tilde{a}_{t}(\phi_{ij}^{l})]_{q-(i-1)N_t} , & (i-1)N_t < q \leq i N_t  \\
0, & \text{otherwise}.
\end{array}
\right.
\end{align}
We sort $\tilde{\alpha}_{ij}^l[k]$s in a descending way for each subcarrier such that $|\tilde{\alpha}_{1}[k]|\geq |\tilde{\alpha}_{2}[k]| \geq \dots \geq |\tilde{\alpha}_{L_s}[k]|$. We sort the corresponding $\theta_{ij}^l$s and $\phi_{ij}^l$s based on this sort. Then, one can  decompose the channel matrix defined in (\ref{eq::bicmb_su_channel_composition}) as 
\ifCLASSOPTIONonecolumn
\begin{align} \label{eq::bicmb_su_channel_decompose}
\mathbf{H}[k] &= \mathbf{A}_r \mathbf{D}[k] \mathbf{A}_t^* = \left[ \mathbf{A}_r |\mathbf{A}_r^{\perp}\right]\widetilde{\mathbf{D}}\left[\text{diag}([e^{j\angle \tilde{\alpha}_1},\dots,e^{j\angle \tilde{\alpha}_{L_s}}])\mathbf{A}_t|\mathbf{A}^{\perp}_t\right]^*,
\end{align}
\else
\begin{align} \label{eq::bicmb_su_channel_decompose}
\mathbf{H}[k] &= \mathbf{A}_r \mathbf{D}[k] \mathbf{A}_t^* \nonumber \\ &=\left[\mathbf{A}_r|\mathbf{A}_r^{\perp}\right]\widetilde{\mathbf{D}}\left[\text{diag}([e^{j\angle \tilde{\alpha}_1},\dots,e^{j\angle \tilde{\alpha}_{L_s}}])\mathbf{A}_t|\mathbf{A}^{\perp}_t\right]^*,
\end{align}
\fi
where we define the matrix $\mathbf{A}_r =[\tilde{\mathbf{a}}_r(\theta_1), \dots, \tilde{\mathbf{a}}_r(\theta_{L_s})]$,  $\mathbf{A}_t =[\tilde{\mathbf{a}}_t(\phi_1), \dots, \tilde{\mathbf{a}}_t(\phi_{L_s})]$, and $\mathbf{D} [k]= \text{diag}\left(\tilde{\alpha}_{1}[k], \dots, \tilde{\alpha}_{L_s}[k]\right)$. Note that in (\ref{eq::bicmb_su_channel_decompose}), $\mathbf{A}_t^{\perp}$ and $\mathbf{A}_r^{\perp}$ are unitary matrices spanning the null space of $\mathbf{H}$ and $\mathbf{H}^T$, $\widetilde{\mathbf{D}}=\left[\text{diag}\left(|\tilde{\alpha}_1|,\dots,|\tilde{\alpha}_{L_s}|\right) | \mathbf{0}_{(M_rN_r-L_s)\times (M_tN_t-L_s)}\right]$, and $\angle \tilde{\alpha}_{l} $ is the phase of the $l$th path.

In a similar way to \cite{Ayach2012,Sohrabi2017}, one can see that when the number of antennas at both the transmitter and the receiver goes to infinity, the optimal fully-digital precoders can be calculated by the hybrid precoding design in which $\mathbf{F}_\text{RF}= {\mathbf{A}}_t$ and $\mathbf{F}_{\text{BB}}[k] =  \boldsymbol{\Gamma}[k]$, where $\boldsymbol{\Gamma}[k]$ is the diagonal matrix of the allocated power for each data stream. One can use the water-filling approach to calculate elements of $\boldsymbol{\Gamma}[k]$. Similarly, for the combiner matrices at the receiver side, the asymptotic optimal analog combiner is given by the set of columns of $\mathbf{A}_r$ corresponding to $N_s$ largest complex gains, $\tilde{\alpha}_l$.

After employing the optimal beam at both the transmitter and the receiver side, the system input-output relation at $k$th subcarrier can be written as 
\begin{align}\label{eq::bicmb_ofdm_su_rec_svd}
y_{s}[k] = \sigma_s[k] x_s[k]  + n_s[k] 
\end{align}
for $s=1,\dots,N_s$ and $k=1,\dots,K$, where $\sigma_s$ is the $s$th largest singular value of $\mathbf{H}[k]$.

There are some criteria which we need to consider when designing the interleaver. For the BICMB-OFDM system, the interleaver is designed such that \cite{Akay2007, Sedighi2020J}
\begin{enumerate}
    \item Consecutive coded bits are interleaved within one OFDM symbol.
    \item Each subchannel should be used at least once within $d_{\text{free}}$ distinct bits among different codewords. 
    \item Consecutive coded bits need to be transmitted over different subcarriers of an OFDM symbol.
\end{enumerate}

\section{Diversity Gain and PEP Analysis} \label{sec:diversity}

In this section, we carry out the analysis of the diversity gain for  BICMB-OFDM and we find the upper bound for the pairwise error probability (PEP).

\begin{theorem}
	Suppose that $N_r\rightarrow \infty$ and $N_t \rightarrow \infty$. Then the bit interleaved coded distributed massive MIMO system employing OFDM can achieve a diversity gain of
	\begin{align}\label{su_dg}
		G_d=\min (d_{\text{free}},M_rM_tL)
	\end{align}
\end{theorem}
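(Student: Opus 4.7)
The plan is to follow the standard BICMB pairwise error probability (PEP) route adapted to the OFDM setting: first derive an upper bound on the PEP between two codewords using the Chernoff bound, next exploit the asymptotic (large $N_t,N_r$) decomposition in (\ref{eq::bicmb_su_channel_decompose}) to replace singular values by magnitudes of path gains, and finally average over the Rayleigh statistics of the $\alpha_{ij}^l$'s to read off the SNR exponent. Because the effective parallel-channel model in (\ref{eq::bicmb_ofdm_su_rec_svd}) is a collection of scalar channels indexed by stream $s$ and subcarrier $k$, the PEP between a transmitted codeword $\mathbf{c}$ and a competitor $\hat{\mathbf{c}}$ will factor as $P(\mathbf{c}\to\hat{\mathbf{c}}\mid \mathbf{H}) \le \prod_{(k,s)\in\mathcal{D}} \exp\!\bigl(-\tfrac{\text{SNR}}{4N_t}\sigma_s^2[k]\,\delta_{k,s}^2\bigr)$, where $\mathcal{D}$ indexes the bit positions at which the two codewords differ and $\delta_{k,s}$ is the corresponding Euclidean distance increment. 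The cardinality of $\mathcal{D}$ is at least $d_{\text{free}}$, and by interleaver design criteria (1)--(3) the positions in $\mathcal{D}$ can be taken to hit distinct subcarriers and to sweep every used stream.

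First I would establish that, as $N_t,N_r\to\infty$, the columns of $\mathbf{A}_t$ and $\mathbf{A}_r$ in (\ref{eq::bicmb_su_channel_decompose}) are asymptotically orthonormal, so the singular values of $\mathbf{H}[k]$ coincide with the ordered magnitudes $|\tilde{\alpha}_s[k]|$ (this is exactly the limit used to justify $\mathbf{F}_\text{RF}=\mathbf{A}_t$ and the optimality of the $N_s$ leading columns of $\mathbf{A}_r$ as the analog combiner). Substituting $\sigma_s[k]=|\tilde{\alpha}_s[k]|$ into the Chernoff bound converts the problem into a purely scalar one involving the $L_s=M_rM_tL$ random variables $\{\alpha_{ij}^l\}$, each independent and $\mathcal{CN}(0,1)$. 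The delay phases $e^{-j2\pi k\tau_{ij}^l/(KT_s)}$ appearing in (\ref{eq::bicmb_ofdm_su_channel_freq}) are deterministic modulations that do not affect $|\tilde{\alpha}_s[k]|$.

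Next I would average the conditional PEP over these $L_s$ complex Gaussians. Because the sort $|\tilde\alpha_1[k]|\ge\cdots\ge|\tilde\alpha_{L_s}[k]|$ is a permutation of the same underlying random set for each $k$, only $L_s$ independent magnitudes drive the randomness across the entire PEP product, even though $|\mathcal{D}|$ may be larger. After taking the expectation and keeping only the SNR exponent, the bound behaves as $\text{SNR}^{-r}$ with $r$ equal to the number of effectively active independent magnitudes appearing in the product. Interleaver criterion (2) guarantees that at least $\min(d_{\text{free}},N_s)$ distinct stream indices appear in $\mathcal{D}$, while criterion (3) ensures the subcarrier indices are spread so that no single $(k,s)$ is reused; combined with the $L_s$ underlying random variables, the exponent is pinned at $\min(d_{\text{free}},L_s)=\min(d_{\text{free}},M_rM_tL)$.

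The main obstacle I anticipate is the correlation across subcarriers: for a fixed stream index $s$, the magnitudes $|\tilde\alpha_s[k]|$ over different $k$ are highly correlated (indeed, they are permutations of the same magnitude set once the sort is taken). The delicate part of the proof is therefore to argue that collecting the PEP terms over distinct subcarriers at the \emph{same} stream does not inflate the diversity order beyond $L_s$, while collecting them over \emph{distinct} streams does, because distinct sorted indices draw from distinct underlying $\alpha_{ij}^l$'s almost surely. I would make this rigorous by conditioning on the ordering permutation, bounding $|\tilde\alpha_s[k]|\ge|\tilde\alpha_{L_s}[k]|$ where needed to decouple the factors, and then invoking the standard result that the product of $r$ independent $\chi^2_2$ variables yields an SNR exponent of $r$. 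A matching lower bound on the diversity order (to show the $\min$ is tight rather than just an upper bound on the achievable exponent) would follow by specializing to a single dominant error event of weight $d_{\text{free}}$ meeting the interleaver criteria with equality.
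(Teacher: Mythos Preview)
Your proposal is correct and follows essentially the same route as the paper: bound the conditional PEP via the $Q$-function/Chernoff inequality, use the large-$N_t,N_r$ limit so that $\sigma_s[k]=|\tilde\alpha_s[k]|$ with $|\tilde\alpha_s|^2\sim\chi_2^2$, average over these chi-square variables, and read off the SNR exponent from the dominant $d_{\text{free}}$-weight error event under the interleaver criteria. Your explicit discussion of the cross-subcarrier dependence (the sorted magnitudes at different $k$ are reorderings of the same $L_s$ variables) is a point the paper handles only implicitly by collapsing the sum into per-stream multiplicities $\alpha_s$ with $\sum_s\alpha_s=d_{\text{free}}$ before taking the expectation.
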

\begin{proof}
	We model the BICMB bit interleaver as $\pi: k' \rightarrow (k,s,i)$, where $k'$ represents the original ordering of the coded bits $c_{k'}$, $k$ represents the time ordering of the signals $x_{k,s}$ and $i$ denotes the position of the bit $c_{k'}$ on symbol $x_{k,s}$.

We define $\chi_b^i$ as the subset of all signals $x\in\chi$. Note that the label has the value $b\in\{0,1\}$ in position $i$.

Then, the ML bit metrics are given by, \cite{Akay2007,Caire1998,Zehavi1992}
\begin{align}\label{ML_bit}
\gamma^i(y_{k,s},c_{k'})=\min_{x \in \chi^i_{c_{k'}}} \left|y_{s}[k]-\sigma_s[k]x\right|^2.
\end{align}

The receiver uses an ML decoder to make decisions based on
\begin{align}\label{ML_dec}
\mathbf{\underline{\hat{{\text{c}}}}}=\text{arg}\min_{\mathbf{\underline{\text{c}}}\in\mathcal{C}}\sum_{k'}\gamma^i(y_{s}[k],c_{k'}).
\end{align}

 Assume we transmit the code sequence $\underline{\text{c}}$ through the channel and we detect $\underline{\hat{\text{c}}}$. Then one can use (\ref{eq::bicmb_ofdm_su_rec_svd}) to write the PEP of $\underline{\text{c}}$ and $\underline{\hat{\text{c}}}$ as \ifCLASSOPTIONonecolumn
	\begin{align}
		P(\underline{\text{c}}\rightarrow \underline{\hat{\text{c}}}|\mathbf{H}[k],\forall k)
		=  P\left(\sum_{k'} \min_{x \in \chi^i_{c_{k'}}} |y_{s}[k]-\sigma_s[k]x|^2 \geq \sum_{k'} \min_{x \in \chi^i_{\hat{\text{c}}_{k'}}} |y_{s}[k]-\sigma_s[k]x|^2\right),
	\end{align} 
\else
	\begin{align}
		P&(\underline{\text{c}}\rightarrow \underline{\hat{\text{c}}}|\mathbf{H}[k],\forall k) \nonumber \\
		=  &P\left(\sum_{k'} \min_{x \in \chi^i_{c_{k'}}} \left|y_{s}[k]-\sigma_s[k]x\right|^2 \geq \sum_{k'} \min_{x \in \chi^i_{\hat{\text{c}}_{k'}}} \left|y_{s}[k]-\sigma_s[k]x\right|^2\right),
	\end{align} 
\fi where $s \in \{1,2,\dots,N_s\}$. 
		
Let's define
\begin{align} 
	&\tilde{x}_{s}[k]=\text{arg}\min_{x \in \chi^i_{{\text{c}}_{k'}}}\left|y_{s}[k]-\sigma_s[k]x\right|^2\label{eq::bicmb_ofdm_su_x_tilde}\\\label{eq::bicmb_ofdm_su_x_hat}
	&\hat{x}_{s}[k]=\text{arg}\min_{x \in \chi^i_{\bar{\text{c}}_{k'}}}\left|y_{s}[k]-\sigma_s[k]x\right|^2.
\end{align}

One can see that the two sets $\chi^i_{{\text{c}}_{k'}}$ and $\chi^i_{\bar{\text{c}}_{k'}} $ are complementary sets of constellation points in the signal constellation set $\chi$. Also, since $\tilde{x}_{k,s} \in \chi^i_{{\text{c}}_{k'}} $ and $\hat{x}_{k,s} \in \chi^i_{\bar{\text{c}}_{k'}} $, therefore $ \tilde{x}_{k,s} \neq \hat{x}_{k,s}$. Furthermore, whatever value we choose for ${x}_{k,s} $ from the set $\chi^i_{{\text{c}}_{k'}} $ the following inequality holds, since $\tilde{x}_{k,s}$ is the minimum based on (\ref{eq::bicmb_ofdm_su_x_tilde})
\begin{align}
	\left|y_{k,s}-\sigma_s x_{k,s}\right|^2 \geq \left|y_{k,s}-\sigma_s\tilde{x}_{k,s}\right|^2.
\end{align}

We want to guarantee there exist $d_{\text{free}}$ distinct pairs of $(\tilde{x}_s[k],\hat{x}_s[k])$ and $d_{\text{free}}$ distinct pairs of $({x}_s[k],\hat{x}_s[k])$  with $d_{\text{free}}$ distinct values of $k$. The interleaver design criteria 2 and 3 can make this happen. One can rewrite the PEP as 
\begin{align}
    P&(\underline{\text{c}}\rightarrow \underline{\hat{\text{c}}}|\mathbf{H}[k],\forall k) \nonumber\\
			=&  P\left(\sum_{k,d_{\text{free}}}  |y_{s}[k]-\sigma_s[k]x[k]|^2 \geq \sum_{k,d_{\text{free}}}  |y_{s}[k]-\sigma_s[k]x[k]|^2\right) \\
			& \leq P\left(\zeta \geq \sum_{k,d_{\text{free}}} \left| \sigma_s[k] \left(x_s[k]-\hat{x}_s[k]\right) \right|^2\right) \\ \label{eq::bicmb_ofdm_su_Q_function}
			& \leq Q\left(\sqrt{\frac{\sum_{k,d_{\text{free}}}d_{\text{min}}^2\sigma_s^2[k]}{2N_0}}\right),
			\end{align}
		where 
	$
		     \zeta = \sum_{k,d_{\text{free}}} \sigma_s[k](\hat{x}_{s}[k] - {x}_{s}[k])^*n_{s}[k] + \sigma_s[k](\hat{x}_{s}[k] - {x}_{s}[k])n_{s}^*[k]
	$
for given $\mathbf{H}[k], \forall k$, and $\zeta$ is a Gaussian random variable with zero mean and variance $2N_0\sum_{k,d_{\text{free}}}|\sigma_s[k]({x}_{s}[k] - \hat{x}_{s}[k])|^2$.

One can use an upper bound for the $Q$ function $Q(x) \leq \frac{1}{2}e^{-x^2/2}$ to find an upper bound for PEP in (\ref{eq::bicmb_ofdm_su_Q_function}) which is
\ifCLASSOPTIONonecolumn
\begin{align} \label{eq::bicmb_ofdm_su_PEP1}
    P(\underline{\text{c}}\rightarrow \underline{\hat{\text{c}}}) = \mathit{E}\left[ P(\underline{\text{c}}\rightarrow \underline{\hat{\text{c}}}|\mathbf{H}[k],\forall k)\right] 
			\leq \mathit{E} \left[ \frac{1}{2}\text{exp}\left(-\frac{d_{\text{min}}^2\sum_{k,d_{\text{free}}}\sigma_s^2[k]}{4N_0}\right)\right].
\end{align}
\else
\begin{align} \label{eq::bicmb_ofdm_su_PEP1}
P(\underline{\text{c}}\rightarrow \underline{\hat{\text{c}}}) &= \mathit{E}\left[ P(\underline{\text{c}}\rightarrow \underline{\hat{\text{c}}}|\mathbf{H}[k],\forall k)\right] \nonumber \\
&\leq \mathit{E} \left[ \frac{1}{2}\text{exp}\left(-\frac{d_{\text{min}}^2\sum_{k,d_{\text{free}}}\sigma_s^2[k]}{4N_0}\right)\right].
\end{align}
\fi
Similarly to \cite{Sedighi2020J}, one can see that the singular values for the channel matrix $\mathbf{H}[k]$ converge to $\alpha_{i,j}^l[k]$. And since by definition $\alpha_{ij}^l[k] \sim \mathcal{CN}(0,1)$, therefore $\sigma_s^2[k] \sim \chi_2^2$ has a chi-square distribution with two degrees of freedom. By defining $\omega_s[k]=\sigma_s^2[k]$, one can see that the probability density function (PDF) of $\omega_s[k]$ is $f(\omega_s[k])=\frac{1}{2\eta}\exp\left(-\frac{\omega_s[k]}{2 \eta}\right)$, where $\eta=\frac{N_rN_t}{L}$. Furthermore, $\alpha_s$ is defined as the number of the times the $s$th subchannel is used within $d_{\text{free}}$ bit, such that $\sum_{s=1}^{N_s}\alpha_s=d_{\text{free}}$, we can rewrite (\ref{eq::bicmb_ofdm_su_PEP1}) as
\ifCLASSOPTIONonecolumn
\begin{align}
     P(\underline{\text{c}}\rightarrow \underline{\hat{\text{c}}})
    & \leq  \frac{1}{2}\prod_{s=1}^{N_s} \left(\int_{0}^{\infty} \text{exp}\left( -\frac{  d_{\text{min}}^2 }  {4N_0} \omega_s[k]\right) f(\omega_{s}[k])d\omega_{s}[k]\right)^{\alpha_{s}} \\
    &= \frac{1}{2} \prod_{s=1}^{N_s} \left( \int_{0}^{\infty} \frac{1}{2\eta}\exp \left( -\frac{d_{min}^2}{4N_0} \omega_{s}[k]  -\frac{1}{2\eta}\omega_s[k] \right) d \omega_s[k]\right)^{\alpha_s}\\
    &= \frac{1}{2} \prod_{s=1}^{N_s} \left( \frac{1}{2\eta}  \frac{1}{ \frac{d_{\text{min}}^2}{4N_0} + \frac{1}{2\eta}} \right) ^{\alpha_s}
    \label{eq::bicmb_ofdm_su_PEP_alpha}\\
     & = \frac{1}{2} \prod_{s=1}^{N_s} \left(1 + \frac{d^2_{\text{min}} N_r }{4L} \text{SNR}\right) ^ {-\alpha_{s}}.
\end{align}
\else
\begin{align}
P&(\underline{\text{c}}\rightarrow \underline{\hat{\text{c}}})\nonumber\\
& \leq  \frac{1}{2}\prod_{s=1}^{N_s} \left(\int_{0}^{\infty} \text{exp}\left( -\frac{  d_{\text{min}}^2 }  {4N_0} \omega_s[k]\right) f(\omega_{s}[k])d\omega_{s}[k]\right)^{\alpha_{s}} \\
&= \frac{1}{2} \prod_{s=1}^{N_s} \left( \int_{0}^{\infty} \frac{1}{2\eta}\exp \left( -\frac{d_{min}^2}{4N_0} \omega_{s}[k]  -\frac{1}{2\eta}\omega_s[k] \right) d \omega_s[k]\right)^{\alpha_s}\\
&= \frac{1}{2} \prod_{s=1}^{N_s} \left( \frac{1}{2\eta}  \frac{1}{ \frac{d_{\text{min}}^2}{4N_0} + \frac{1}{2\eta}} \right) ^{\alpha_s}
\label{eq::bicmb_ofdm_su_PEP_alpha}\\
& = \frac{1}{2} \prod_{s=1}^{N_s} \left(1 + \frac{d^2_{\text{min}} N_r }{4L} \text{SNR}\right) ^ {-\alpha_{s}}.
\end{align}
\fi
We define the function $g(d,\bm{\alpha},\chi)$ as the PEP of two codewords with Hamming distance of $d$ from each other, where $\bm{\alpha}=[\alpha_{1},\dots,\alpha_{N_s}]$. Please note that the coefficient $\alpha_{s}$ depends on the codewords $\underline{\text{c}}$, $\underline{\hat{\text{c}}}$, $\text{d}(\underline{\text{c}}-\underline{\hat{\text{c}}})$, and the interleaver. In a similar way to 
\cite{Akay2007}, we can write down the union bound to illustrate the diversity order of the system. Therefore, one can calculate the BER $P_b$ as
\begin{align}\label{eq::bicmb_ofdm_delta}
    P_b \leq \frac{1}{k_c} \sum_{d=d_{\text{free}}}^{\infty} \sum_{i=1}^{W_I(d)} g(d,\bm{\alpha}(d,i),\chi).
\end{align}
Let us define the following
\begin{align} \label{eq::bicmb_ofdm_alpha_min}
    \Delta(\bm{\alpha}(d,i)) &= \sum_{s=1}^{N_s} \alpha_s(d,i) \\
    \bm{\alpha} (d_{\text{free}},j) &= \arg \min_{\substack{\bm{\alpha}(d_{\text{free}},i)\\ \label{eq::bicmb_ofdm_su_Delta_min} i=1,\dots,W_I(d_{\text{free}})}} \Delta(\bm{\alpha}(d_{\text{free}},i)).
\end{align}
Based on the definition in (\ref{eq::bicmb_ofdm_su_Delta_min}), we can see that the $\Delta(\bm{\alpha}(d,j))$ is the minimum value for $d\geq d_{\text{free}}$. Therefore, $\Delta(\bm{\alpha}(d,i))\geq \Delta(\bm{\alpha}(d,j))$ for $i=1,\dots,W_I(d)$. By using equations (\ref{eq::bicmb_ofdm_su_PEP_alpha})-(\ref{eq::bicmb_ofdm_su_Delta_min}), one can calculate  $P_b$ as 
\ifCLASSOPTIONonecolumn
\begin{align} \label{eq::bicmb_ofdm_su_BER_total}
    P_b \leq \frac{1}{k_c} \left[\sum_{d=d_{\text{free}}}^{\infty} \sum_{\substack{i=1i\neq j}}^{W_I(d)} g(d,\bm{\alpha}(d,i),\chi)\right. \; \;+ \left.\frac{1}{2} \prod_{s=1}^{N_s} \left(1+\frac{d_{\text{free}}^2 N_r}{4L}\text{SNR}\right) ^ {-\alpha_{s}( d_{\text{free}},j)}   \right].
\end{align}
\else
\begin{align} \label{eq::bicmb_ofdm_su_BER_total}
P_b \leq \frac{1}{k_c} &\left[\sum_{d=d_{\text{free}}}^{\infty} \sum_{\substack{i=1 i\neq j}}^{W_I(d)} g(d,\bm{\alpha}(d,i),\chi)\right. \nonumber \\&\; \;+ \left.\frac{1}{2} \prod_{s=1}^{N_s} \left(1+\frac{d_{\text{free}}^2 N_r}{4L}\text{SNR}\right) ^ {-\alpha_{s}( d_{\text{free}},j)}   \right].
\end{align}
\fi
Since we are working in the high SNR regime, we are only interested in the smallest power of SNR  in (\ref{eq::bicmb_ofdm_su_BER_total}). In high SNR regimes, the first term disappears. Therefore, the diversity gain, i.e., the power of SNR term would be $\Delta(\bm{\alpha}(d,j))$ when we transmit $N_s$ different symbols through the system.  Furthermore, BICMB-OFDM achieves full diversity order of $M_rM_tL$ when $M_rM_tL \leq \Delta(\bm{\alpha}(d,j))$ for spatial multiplexing order of $N_s$, which will be covered in the next section.
\end{proof}

\section{Simulation Results} \label{sec:simulation}
In the simulations, we use the industry standard 64-state 1/2-rate (133,171) $d_{\text{free}}=10$ convolutional code. The large scale fading coefficient is assumed to be the same for all subchannels, i.e., $\beta = -20$ dB. For the sake of simplicity, only ULA array configuration with $d=0.5\lambda$ is considered at RAUs. Each OFDM symbol has $64$ subcarriers, and has $4$ $\mu s$ duration, of which $0.8$ $\mu s$ is CP. Information bits
are mapped onto 16 quadrature amplitude modulation ($16$-QAM) symbols in each subchannel. 
\ifCLASSOPTIONonecolumn
\begin{figure}[!t]
	\centering
	\includegraphics[width=.7\textwidth]{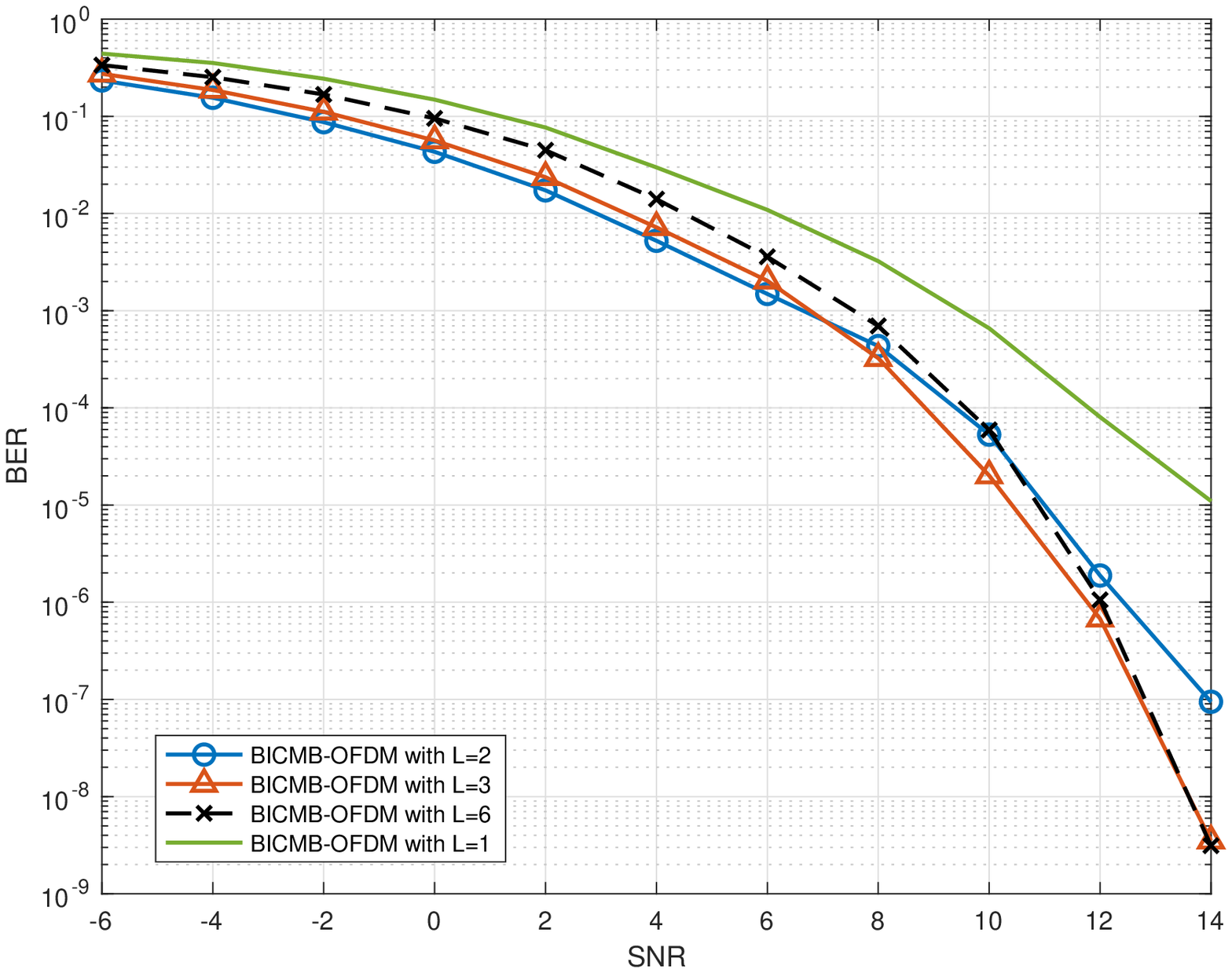}
	\caption{BER with respect to SNR for a $2\times 2$ with different values of $L$ and $N_s$.  }
	\label{fig::ofdm::2x2}
\end{figure}
\else
\begin{figure}[!t]
	\centering
	\includegraphics[width=.3\textwidth]{figures/ber_bicmb_ofdm_2x2.eps}
	\caption{BER with respect to SNR for a $2\times 2$ with different values of $L$ and $N_s$.  }
	\label{fig::ofdm::2x2}
\end{figure}

\fi

Fig. \ref{fig::ofdm::2x2} illustrates the results for BICMB-OFDM for different number of propagation paths. In this figure, the number of RAUs at both the transmitter and the receiver side is two, i.e., $M_t=M_r=2$.  Since the spectrum of (133, 171) is used in the simulations for this paper, one can see that in this spectrum there are 11 codewords with a Hamming distance of $d_{\text{free}}$ from the all-zeros codeword. By comparing to the all-zeros codeword, the codeword $[1110010100010101110000000\dots]$ has the worst performance for BICMB-OFDM \cite{Akay2007}. On this codeword, the code and the interleaver combination result in the number of subchannel use vector $\bm{\alpha}$ to be $\bm{\alpha} = [3,7]$. This means that for the case of $L=1$ and $L=3$, the system achieves a diversity gain of $G_d=4$ and $G_d=8$, respectively. However, based on (\ref{eq::bicmb_ofdm_delta}), when the number of propagation paths increases, such that $M_rM_tL \geq \Delta(\bm{\alpha}(d,i))$, the maximum achievable diversity gain which is $\sum_{l=1}^{2}\alpha_l = 10$, is achieved. This result is confirmed by comparing two different cases, where $L=3$ and $L=6$ in the first and second case, respectively. Both of them achieve the same maximum diversity gain of $10$, which can be seen from Fig. {\ref{fig::ofdm::2x2}}.

Similarly, it can be seen from Fig. \ref{fig::ofdm::4x4} that the maximum diversity gain is achieved for the case when $M_r=M_t=4$. In this case, when compared to the all-zeros codeword, the codeword $[001110010100010101110000 \dots]$ leads to the worst diversity order \cite{Akay2007}. The $\bm{\alpha}$ coefficients are given as $\bm{\alpha}=[1,3,2,4]$ , which leads to a maximum diversity order of $10$  based on (\ref{eq::bicmb_ofdm_alpha_min}).

\ifCLASSOPTIONonecolumn
\begin{figure}[!t]
	\centering
	\includegraphics[width=.7\textwidth]{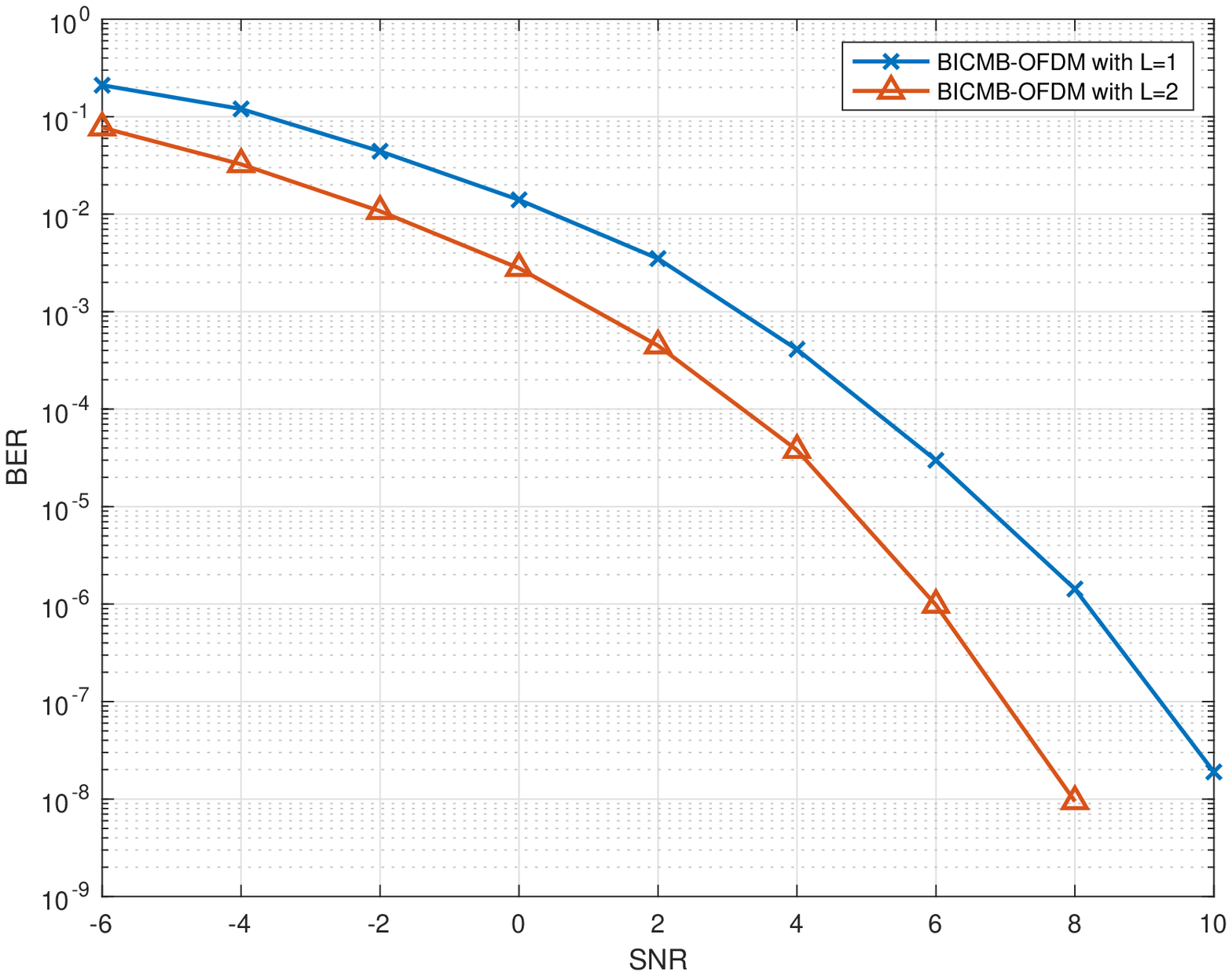}
	\caption{BER with respect to SNR for a $4\times 4$ with different values of $L$ and $N_s$.  }
	\label{fig::ofdm::4x4}
\end{figure}
\else
\begin{figure}[!t]
	\centering
	\includegraphics[width=.3\textwidth]{figures/ber_bicmb_ofdm_4x4.eps}
	\caption{BER with respect to SNR for a $4\times 4$ with different values of $L$ and $N_s$.  }
	\label{fig::ofdm::4x4}
\end{figure}

\fi

Fig. \ref{fig::bicmb_ofdm_dif_antenna} illustrates the effect of the number of antennas at both the receiver side and the transmitter side on the diversity gain. One can see that changing the number of antennas at the RAUs does not affect the diversity gain. This confirms (\ref{eq::bicmb_ofdm_su_BER_total}) where the diversity gain is independent of the number of antennas at each RAU. Furthermore, one can see that by doubling the number of resources here, i.e., the number of antennas at the transmitter or the receiver, the performance of the system gets better by a factor of 3 dB.
\ifCLASSOPTIONonecolumn
\begin{figure}[t!]
	\centering
	\includegraphics[width=0.7\textwidth]{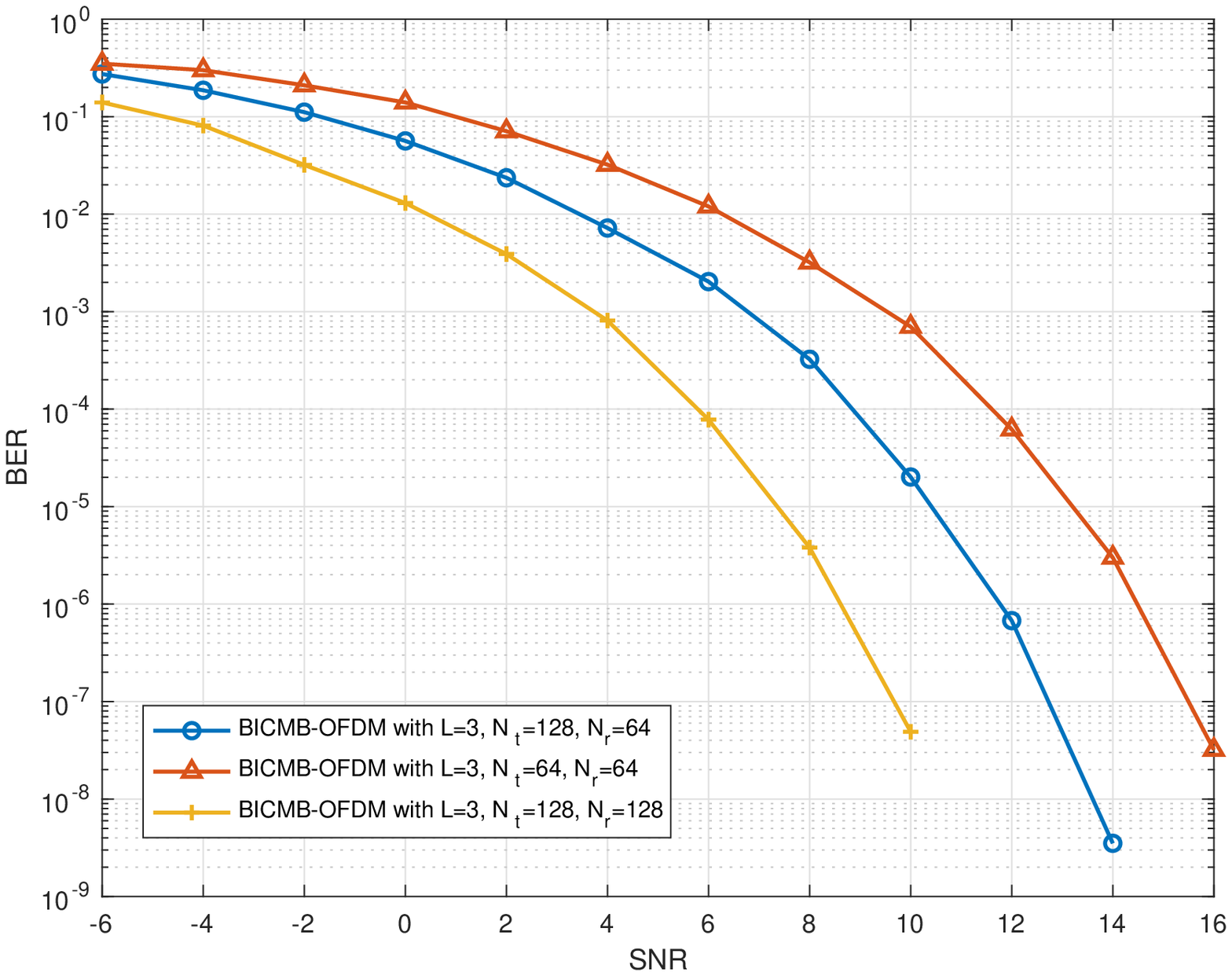}
	\caption{BER with respect to SNR for different number of antennas at each RAU at the transmitter and receiver. $M_r=M_t=2$.}
	\label{fig::bicmb_ofdm_dif_antenna}
\end{figure}
\else
\begin{figure}[t!]
	\centering
	\includegraphics[width=0.3\textwidth]{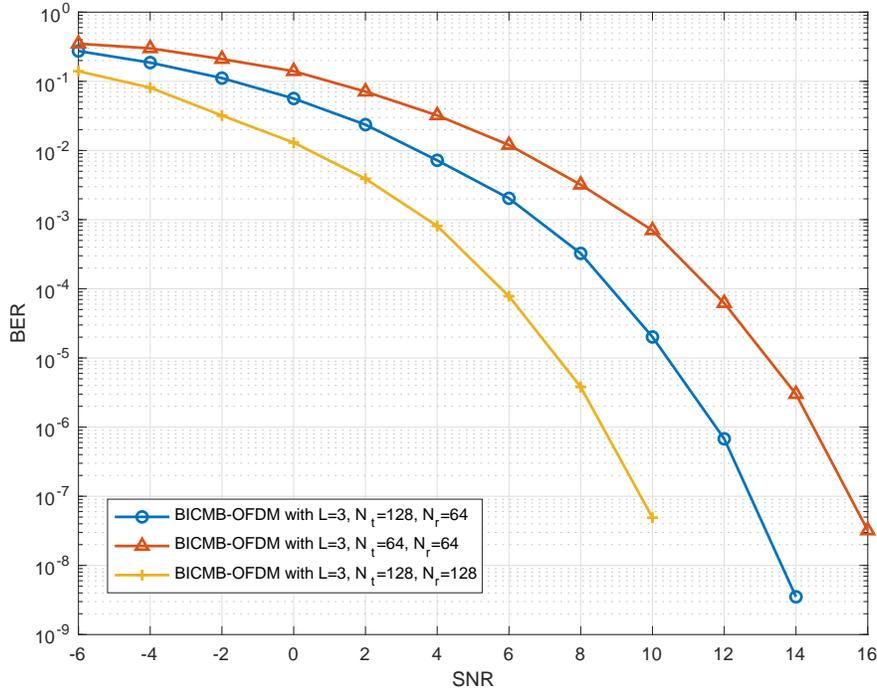}
	\caption{BER with respect to SNR for different number of antennas at each RAU at the transmitter and receiver. $M_r=M_t=2$.}
	\label{fig::bicmb_ofdm_dif_antenna}
\end{figure}

\fi
\section{Conclusion} \label{sec:conclusion}

In this work, we showed that by utilizing BICMB-OFDM in a mm-Wave  MIMO system with distributed antenna subarray architecture, one can achieve full diversity gain. This means, the diversity gain is independent of the number of transmitted data streams and can be increased by increasing the number of RAUs at the transmitter or the receiver. We also showed that the diversity gain is independent of the number of antennas at the RAUs in both the transmitter and the receiver.

 	\bibliographystyle{IEEEtran}
 	\bibliography{References}
	\end{document}